\documentclass{article}

\usepackage[english]{babel}

\usepackage[letterpaper,top=2cm,bottom=2cm,left=3cm,right=3cm,marginparwidth=1.75cm]{geometry}

\usepackage{amsmath}
\usepackage{graphicx}
\usepackage[colorlinks=true, allcolors=blue]{hyperref}

\usepackage{amsmath,amsfonts}
\usepackage{amsthm}
\usepackage{amssymb}
\usepackage{enumitem}
\usepackage{floatrow}
\usepackage{csquotes}

\newtheorem{claim}{Claim}
\newtheorem{corollary}{Corollary}
\newtheorem{defn}{Definition}

\usepackage{graphicx}
\usepackage[caption=false]{subfig}

\usepackage{biblatex} 
\addbibresource{FastParam.bib}

\newcommand{\Addresses}{{
  \bigskip
  \footnotesize

  T.~Gilat, \textsc{Department of Computer Science, Bar-Ilan University,
    Ramat Gan, Israel 5290002.}\par\nopagebreak
  \textit{E-mail:} \texttt{tom.gilat@biu.ac.il}

\medskip

  B.~Gilat, \textsc{Department of Computer Science, Tel Aviv University, Tel Aviv, Israel 6997801.}\par\nopagebreak
  \textit{E-mail:} \texttt{ben.gilat@mail.tau.ac.il}

}}

\begin{document}

\title {Fast Conformal Parameterization\\ of Disks and Sphere Sectors}

\author{Tom Gilat and Ben Gilat}


\date{\today}
\maketitle

\begin{abstract}

We prove a novel method for the embedding of a 3-fold rotationally symmetric sphere-type mesh onto a subset of the plane with 3-fold rotational symmetry. The embedding is free-boundary with the only additional constraint on the image set is that its translations tile the plane, in turn this forces the angles at the embedding of the branch points in the construction. These parameterizations are optimal with respect to the Dirichlet energy functional defined on simplicial complexes. Since the parameterization is over a fixed area domain, it is conformal (i.e. a minimizer of the LSCM energy). The embedding is done by a novel construction of a torus from 63 copies of the original sphere. As a foundation for this result we first prove the optimality of the embedding of disk-type meshes onto special types of triangles in the plane, and rectangles. The embedding of the 3-fold symmetric torus is full rank and so cannot be reduced by simpler constructions. 3-fold symmetric surfaces appear in nature, for example the surface of the 3-fold symmetric proteins PIEZO1 and PIEZO2 which are an important target of current studies. 
\end{abstract}

\maketitle

\section{Introduction}

 In this work we start by describing a method for parametrizing a simplicial complex, which is a combinatorial disk, embedded in $\mathbb{R}^3$ onto three types of domains in $\mathbb{R}^2$. By "parametrizing" we mean that we give an efficient way to compute its embedding. The three types of domains are a right-angled isosceles triangle, an equilateral triangle and a square. These parameterizations are optimal with respect to the Dirichlet energy functional defined on simplicial complexes. In practice one can write a suitable system of linear equations for the coordinates of the images in the plane of the vertices of the embedding. This method and its related proofs were described in \cite{OrbifoldTutte}. However, by giving direct proofs which utilize in a novel way the symmetry of the constructed simplicial complex, we are able to address and prove the main contribution of the article, namely, parameterizing a 3-fold rotationally symmetric sphere-type mesh onto a tile in the plane, which has 3-fold rotational symmetry and a free-boundary under the constraint that the plane can be covered with translations of such a tile. This parameterization is optimal and it is rendered by solving two full rank linear systems. 
 
 \begin{figure}[ht!]
\begin{center}
\hspace*{-0.4in}
\includegraphics[trim=0cm  0cm 0cm 0cm ,clip, scale=0.64]{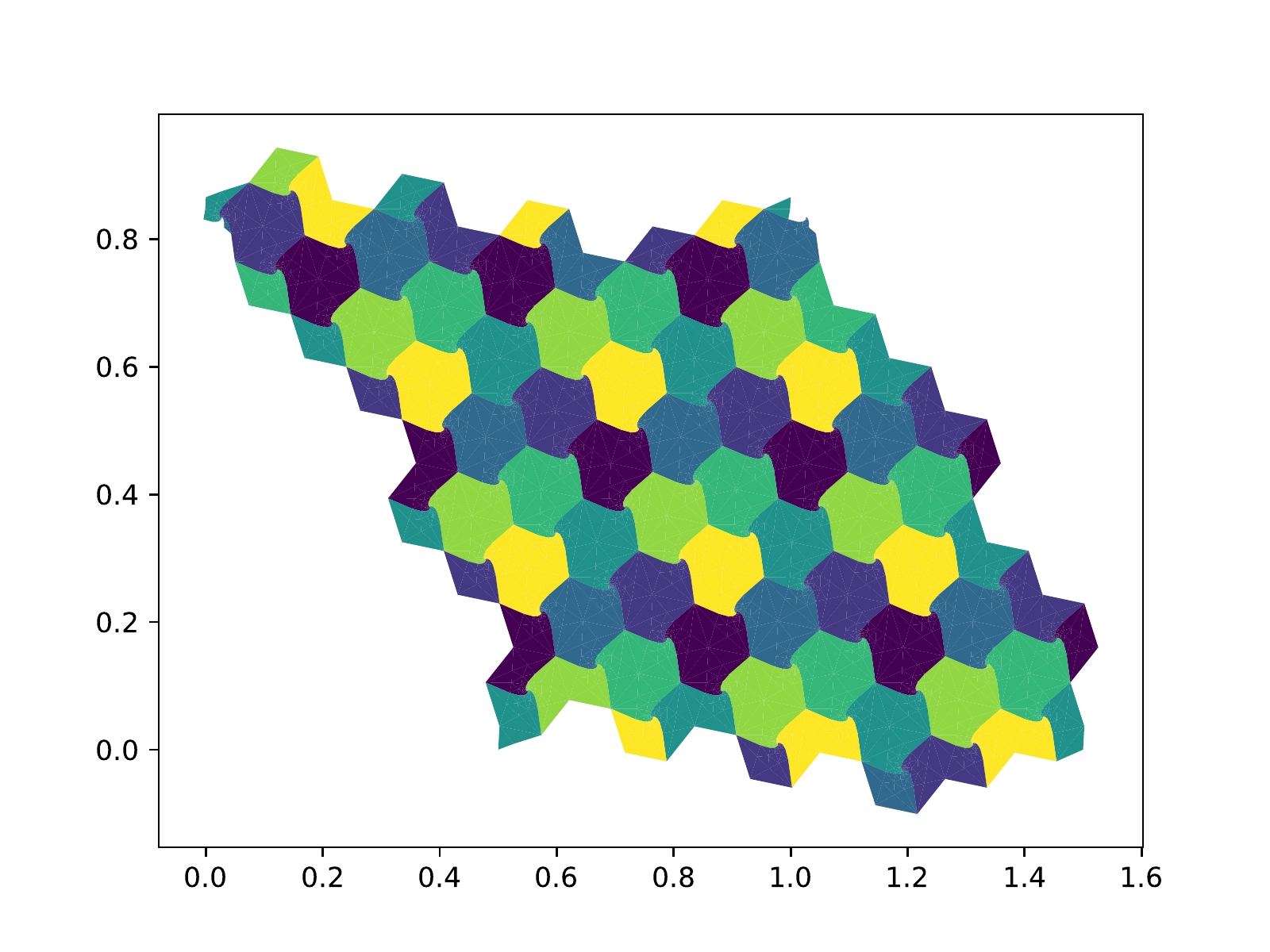}
\caption{Embedding of the torus constructed out of 63 sphere copies (using equal weights for simplicity). }
\label{fig:embedding_noedge_full}
\end{center}
\end{figure}
 
 For the matter of the proof, in each one of the cases we construct out of copies of the given mesh, a suitable simplicial complex embedded in $\mathbb{R}^3$ which is homeomorphic to a torus. We then embed this torus in a suitable plane torus, i.e. a quotient space of $\mathbb{R}^2$ by a rank 2 lattice, with weights which are calculated according to the geometry of the given surface and guarantee minimization of the energy considered. The validity of the embedding is due to proofs by Lovasz \cite[Chapter 7, p.~ 91-118]{lovász2019graphs} and Gortler, Gotsman and Thurston \cite{GortlerTorus}. 
 
 Parameterizations which minimize the Dirichlet energy where the mapping has a target domain of fixed area, minimize the LSCM energy which correspond to the notion of conformal mappings (see \cite{LSCMEnergy}). Conformal mappings are highly desired in the area of Geometry Processing, see for example \cite{Dym732}. The equivalent notion of discrete holomorphic 1-forms first appeared in \cite{gu2003global} and \cite{jin2004optimal}. For the theory about convergence of discrete holomorphic 1-forms to conformal functions on surfaces see  \cite{acta_numericaFEM}. Discrete conformal functions also have a formalism via circle packings, see \cite{stephenson2005introduction}. For a very readable treatise on holomorphic 1-forms (or discrete analytic functions), see \cite{Analytic,lovász2019graphs}.
 
 This method for embedding a 3-fold rotationally symmetric mesh has valuable applications. It has been shown that symmetry in protein structures has important roles in robustness to perturbation and in function \cite{symmetryPNAS}. The 2021 Nobel Prize for Physiology and Medicine was awarded to researchers who studied the PIEZO1 and PIEZO2 proteins which have 3-fold rotationally symmetric structure. 3-fold symmetric images also appear in different symbolism. See Figure~\ref{fig:piezo2}. Using the presented method one can rapidly obtain a minimally distorted 2d image of the interface surface of such proteins which would enable machine learning studies concerning the effect of mutations at the surface and the surface characteristics. Note that a 3-fold rotationally symmetric surface of a structure, can be made from 3 copies of a disk-type surface, with a connecting scaffold, and can then be used in similar studies.
 
 The proofs use a simple symmetry based framework consisting of identifying a symmetry in the image domain, the plane torus in this case, and looking for a corresponding symmetric surface which can be constructed from copies of the given surface. For the different symmetries we define conjugate maps, one in the image domain and one in the source domain. These maps, when composed on the right side and on the left side respectively with the optimal embedding map, yield the same optimal embedding map. We then conclude that the embedding has a corresponding symmetry. Applying this type of argument with different pairs of conjugate maps prove the desired symmetries of the embedding. Then we conclude optimality of the embedding map restricted to each one of the copies consisting of the constructed surface. 
 
 The most elaborate application of this method, and the most useful in our opinion, is the construction of a torus out of 63 copies of a 3-fold rotationally symmetric sphere-type mesh. This number of copies is the minimal number that allows the construction of a branched covering with the same ramification structure for each of the branching points positioned in identical positions in 3d space. This can be validated by examining the Riemann-Hurwitz formula which relates the ramification structure of branched covering maps to the genus of the domain (see \cite{Haim2019SurfaceNV} for details), and taking into consideration the geometric constraints required for the stitching of the copies. We are able to parametrize the torus onto a 120 degrees rhombus as shown in Figure \ref{fig:embedding_noedge_full}, by solving a full rank system of linear equations, whose size is approximately 63 times the number of vertices in the original sphere-type mesh. It is interesting to compare this to the embedding of a sphere described in  \cite{OrbifoldTutte} onto "Orbifold of type 2". If our symmetric sphere was constructed out of 3 spheres, we would expect to obtain the same result as they did. However, for any other 3-fold rotationally symmetric sphere, our method which is specialized for 3-fold symmetric meshes and embeds the sphere using a much larger full rank linear system will yield minimally distorted parameterizations. We refer to the three symmetric parts of the sphere-type mesh as sphere sectors. 
 
 The paper starts with the case of embedding a complex onto a right-angled isosceles triangle. In Section 2 we construct the torus to be embedded for that case. In Section 3 we describe the energy functional to be minimized and explain what the weights are. In Section 4 we prove the desired symmetries of the embedding and conclude optimality of the embedding of a single copy of the given complex. In Section 5 we deal rigorously with the case of embedding onto an equilateral triangle. The case of embedding onto a rectangle is described very briefly in Section 6. In Section 7 we remark about the complexity of implementing the embedding. Then we deal with the most elaborate case - embedding of a sphere with 3-fold rotational symmetry.



\begin{figure}%

    \subfloat[\centering The flag of the Isle of Man]{\includegraphics[trim = 0cm  18cm 5cm 1cm ,clip, scale=0.45]{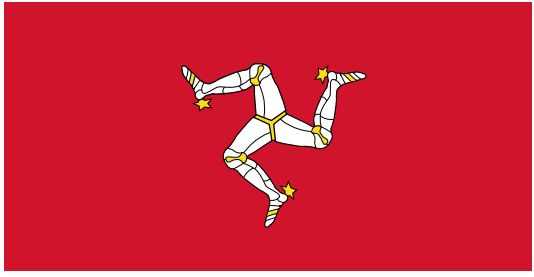}}%
    \qquad\qquad
   \subfloat[\centering Li Wang et al., Cryo-EM Structure of the Mammalian Tactile Channel Piezo2, 2019  (doi:10.2210/pdb6KG7/pdb) \cite{Wang2019}]{{\includegraphics[trim=5cm  8cm 5cm 9.5cm ,clip, scale=0.40]{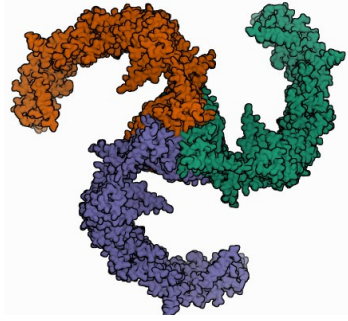}}}
    \caption{3-fold symmetry in symbolism and in nature.}
    \label{fig:piezo2}
\end{figure}



\section{Constructing the torus}
We begin with the case of embedding a complex onto a right-angled isosceles triangle.

We start with two definitions (see also \cite{stephenson2005introduction}).
\begin{defn}
A surface is a connected topological 2-manifold, that is, a connected Hausdorff space $S$ in which each point has a neighborhood that is homeomorphic to an open subset of the plane. We require the transition maps to be continuous. 
\end{defn}

\begin{defn}
A disc-type mesh is is a combinatorial disc (that is, simplicial 2-complex which is simply connected, finite and with nonempty boundary) embedded in $\mathbb{R}^3$ (i.e. it is a surface, 1-simplices are embedded as straight lines).
\end{defn}

We construct $\widetilde{\mathcal{M}}$, a simplicial 2-complex immeresed in $\mathbb{R}^3$, with torus topology. By torus topology, we mean that it is homeomorphic to $\mathbb{R}^2/\mathbb{Z}^2$.


Let $\mathcal{M}$ be a disc-type mesh. By definition $\mathcal{M}$ has a nonempty boundary and therefore its boundary contains at least 3 distinct points. We choose three distinct veritices on the boudary of $\mathcal{M}$: $v_0,v_1,v_2\in\mathbb{R}^3$. We let $\Gamma_0$ be the path between $v_0$ and $v_1$ on the boundary of the disc. Let $\Gamma_1$ be the path on the disc's boundary between $v_1$ and $v_2$, and let $\Gamma_2$ be the path on the disc's boundary between $v_2$ and $v_0$. Simply connectedness implies that these paths are uniquely defined. 

We now define eight duplicates of  the disk-type mesh $\mathcal{M}$, this means eight identical combinatorial disks embedded identically in $\mathbb{R}^3$. Denote them by $\mathcal{M}_1,\mathcal{M}_2,...\mathcal{M}_8$.
For each copy $\mathcal{M}_i$ $(i=1,...,8)$, let $v_0^i,v_1^i,v_2^i$ be the vertices positioned at $v_0,v_1,v_2$ respectively. For each copy $\mathcal{M}_i$ $(i=1,...,8)$, let $\Gamma_0^i$ be the path between $v_0^i$ and $v_1^i$ on the boundary of $\mathcal{M}_i$. Let $\Gamma_1^i$ be the path on the $\mathcal{M}_i$'s boundary between $v_1^i$ and $v_2^i$, and let $\Gamma_2^i$ be the path on $\mathcal{M}_i$'s boundary between $v_2^i$ and $v_0^i$.

We now construct a simplical 2-complex $\widetilde{\mathcal{M}}_\Delta$, immersed in $\mathbb{R}^3$, out of the copies $\mathcal{M}_i, i=1,...,8$. This is done by uniting the following pairs of paths: $\Gamma_0^{i}$ is united with $\Gamma_0^{i+1}$ for $i=1,3,5,7$. We name the united paths $\widetilde{\Gamma}_0^1,\widetilde{\Gamma}_0^2, \widetilde{\Gamma}_0^3, \widetilde{\Gamma}_0^4$ respectively. Secondly, we unite the following pairs: $\Gamma_1^1$ with $\Gamma_1^8$, and $\Gamma_1^i$ with $\Gamma_1^{i+1}$ for $i=2,4,6$. We name these united paths $\widetilde{\Gamma}_1^1,\widetilde{\Gamma}_1^2, \widetilde{\Gamma}_1^3, \widetilde{\Gamma}_1^4$ respectively. Lastly, we unite the following pairs: $\Gamma_2^1$ with $\Gamma_2^4$, 
$\Gamma_2^2$ with $\Gamma_2^7$, $\Gamma_2^3$ with $\Gamma_2^6$, and $\Gamma_2^5$ with $\Gamma_2^8$. We name these united paths $\widetilde{\Gamma}_2^1,\widetilde{\Gamma}_2^2, \widetilde{\Gamma}_2^3, \widetilde{\Gamma}_2^4$ respectively.

We will show that $\widetilde{\mathcal{M}}_\Delta$ is homeomorphic to the 2-torus. We will refer to the 8 different sub-simplices of $\widetilde{\mathcal{M}}_\Delta$ corresponding to $\mathcal{M}_i, i=1,...,8$, as $\mathcal{M'}_i, i=1,...,8$ in accordance. Note that the $\mathcal{M}'_i$'s are not disjoint, as we united edges and vertices to construct $\widetilde{\mathcal{M}}_\Delta$.
We use Tutte embedding (see \cite[Chapter 3, p.~ 23, Thm.~3.2]{lovász2019graphs}), to map each $\mathcal{M'}_i$ to $\mathrm{M}'_i$ in the $\mathbb{R}^2/\mathbb{Z}^2$ according to the following diagram. Each $\mathcal{M}'_i, i=1,...,8$, is mapped to $\mathrm{M}'_i$ in the diagram, where each bordering path $\widetilde{\Gamma}_j^k, j\in \{0,1,2\}, k\in \{1,2,3,4\}$ is mapped onto $\gamma_j^k$ up to a translation.
We demand that the vertices on each path are mapped to equally spaced vertices on the line it is mapped onto, and of course that the extreme vertices of each path are mapped to the extreme points of that line. If we consider the well defined total map from $\widetilde{\mathcal{M}}_\Delta$ to $\mathbb{R}^2/\mathbb{Z}^2$, we have a homeomorphism based on the properties of Tutte embedding.

From now on, we refer to $\widetilde{\mathcal{M}}_\Delta$ as $\widetilde{\mathcal{M}}$. We regard $\widetilde{M}$ as a surface, but also regard the immersed (or embedded) subsets of the underlying simplicial complex structure of $\widetilde{M}$. 

\begin{figure}
\begin{center}
\qquad\qquad
\includegraphics[trim=4.5cm  5cm 5cm 3cm ,clip, scale=0.7]{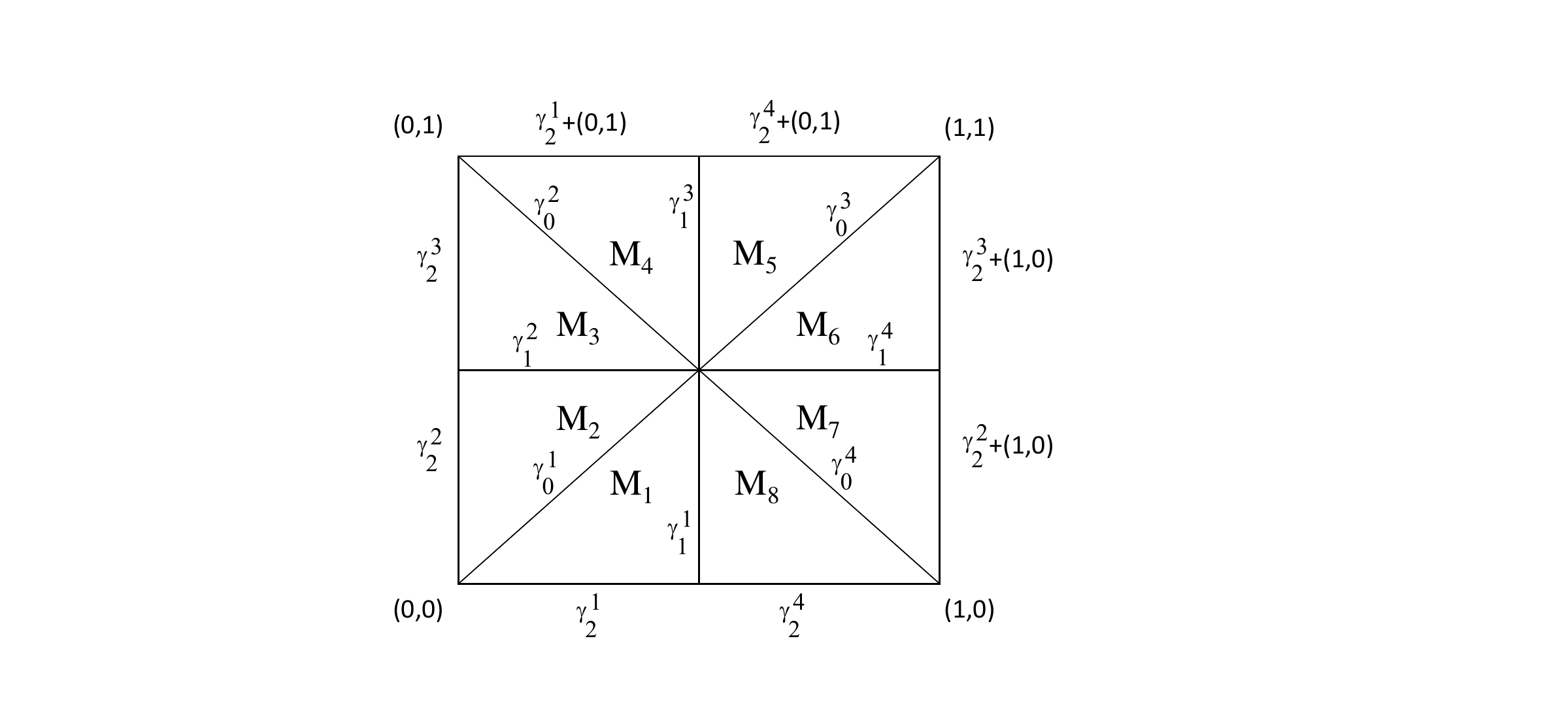}
\caption{Embedding of the 8 disks construction onto $\mathbb{T}^2=\mathbb{R}^2/\mathbb{Z}^2$.}
\label{fig:torus1}
\end{center}
\end{figure}

\section{Optimal embedding of the torus}
Given two vectors $v_1,v_2\in\mathbb{R}^2$, let $\Lambda$ be the lattice generated by $v_1,v_2$. We wish to find an embedding of $\widetilde{\mathcal{M}}$ onto $\mathbb{R}^2/\Lambda$, which minimizes the Dirichlet energy functional. The proof for the validity of the embedding itself is due to results by Gortler, Gotsman and Thurston or Lovasz \cite{GortlerTorus}\cite{lovász2019graphs}. 

In the general setting the Dirichlet energy of a map $U$ is defined as follows. Let $\mathcal{X}=\mathcal{X}^n$ and $\mathcal{Y}=\mathcal{Y}^m$ be two smooth compact Riemannian manifolds of dimension $n$ and $m$, respectively. We assume $\mathcal{X}$ and $\mathcal{Y}$ are equipped with metric tensors $(g_{\alpha\beta})$ and $(\gamma_{ij})$, respectively, in some local coordinate charts $(x_1,...,x_n)$ at $x$, and $(U^1,...,U^m)$ at $U(x)$ on $\mathcal{X}$ and $\mathcal{Y}$, respectively. The \textit{Dirichlet energy} of a smooth map $U:\mathcal{X}\rightarrow\mathcal{Y}$ is defined as the integral of the square of the derivative $dU$. More precisely, the \textit{energy density} of $U$ is
\[
    e(x,U):=\frac{1}{2}|dU_x|^2=\frac{1}{2}\mathrm{tr}[(dU_x)^* dU_x]
\]
(the Hilbert-Schmidt norm is independent of the choice of frame) and the Dirichlet energy of $U$ is 
\[
    \mathcal{E}(U,\mathcal{X}):=\int_\mathcal{X} e(x,U)d\mathrm{vol}_\mathcal{X}=\frac{1}{2}\int_\mathcal{X}|dU_x|^2 d\mathrm{vol}_\mathcal{X}\ .
\]
And so,
\[
      \mathcal{E}(U,\mathcal{X})=\frac{1}{2}\int_{\mathcal{X}} g^{\alpha\beta}\gamma_{ij}\frac{\partial U^i}{\partial x^\alpha}\frac{\partial U^j}{\partial x^\beta} d\mathrm{vol}_\mathcal{X}\ .
\]
The critical points of the Dirichlet energy functional are the harmonic functions, see \cite{EellsSampson}.

According to Theorem 3.3 in \cite{GortlerTorus}, we have a function $\widetilde{U}$ from the set of all faces, $\widetilde{\mathcal{M}}$, onto $\mathbb{R}^2/\Lambda$. For $\Delta\in\widetilde{\mathcal{M}}$ and $\widetilde{U}(\Delta)$ the metric tensors are the ones induced from the Euclidean Riemannian metric on $\mathbb{R}^3$. The simplicial Dirichlet energy is then defined to be
\[
    \mathcal{E}(\widetilde{U},\widetilde{\mathcal{M}}):=\sum_{\Delta\in\widetilde{\mathcal{M}}}\mathcal{E}(\widetilde{U}|_\Delta,\Delta).
\]

In applications one should use Pinkall and Polthier classic work \cite{pinkall1993} for the calculation of the "cotangent weights" which yield the optimal embedding, $\widetilde{U}$, with respect to the simplicial Dirichlet energy functional. Note that in the embeddings we consider, we either let the boundary be free (in the 3-fold symmetric case) or constrain the points to a line on the boundary (in the case of the triangles and the square) but we do not fix them and thus cannot use Pinkall and Polthier classic work as is since it requires fixing the points on the boundary. The image of each vertex in $\widetilde{\mathcal{M}}$ is mapped to a weighted average of the images of its neighbours according to the following equation
\begin{equation*}
    \sum_{i=1}^n w_i (x_i-x_0)\equiv\sum_{i=1}^n\left(\cot \alpha_i + \cot \beta_{i}\right)(x_i-x_0)=0,\quad x_i\in\mathbb{R}^2,
\end{equation*}
where $x_i, i=1,...,n$, is an ordering of the 1-ring of $x_0$, and $(\alpha_i)_{i=1}^n,(\beta_i)_{i=1}^{n}$ are defined as follows. Set $x_{n+1}=x_1$. For $i=1,...,n$, let $\alpha_i$ be the angle between the lines $\overline{x_0 x_i}$ and $\overline{x_i x_{i+1}}$, and let $\beta_i$ be the angle between the lines $\overline{x_0 x_{i+1}}$ and $\overline{x_i x_{i+1}}$. In the  calculation of the weights, the ordering of the 1-rings should be in the direction which keeps a consistent orientation of the faces of the simplicial complex. For simplicity we used $w_i=1$ for all $i$. As mentioned, in our set of equations the boundaries were not fixed. Instead we fixed two vectors that generate a lattice $\Lambda$, and we wrote the system of equations in $\mathbb{R}^2/\Lambda$. Compare with the equations for mapping the torus that can be found in [p.~101, Eq. 5,6]\cite{GortlerTorus}, which imply the equations we used.

Throughout this work we will assume that the weights are positive. This is the case if no obtuse angles occur, or more specifically one uses intrinsic Delaunay triangulation for the surface.

\section{Properties of the mapped torus}
We now show the symmetry properties of the one-to-one Dirichlet energy minimizing mappings of $\widetilde{\mathcal{M}}$ onto $\mathbb{R}^2/\Lambda$, where $\Lambda$ is a lattice generated by two unit vectors, i.e. $\Lambda = \langle v_1,v_2\rangle$ and $\lVert v_1 \rVert=\lVert v_2 \rVert=1$. (Norm is the 2-norm.)

We first examine the one-to-one Dirichlet energy minimizing mapping of $\widetilde{\mathcal{M}}$ onto $\mathbb{R}^2/\widetilde{\Lambda}$, where $\widetilde{\Lambda} = \langle e_1,e_2\rangle,\ e_1=(1\  0)^T, e_2=(0\ 1)^T$, i.e. the case where $\Lambda=\mathbb{Z}^2$.

Let $l_H = \mathbb{R}\begin{pmatrix}1 \\ 0\end{pmatrix}$ and  $l_V= \mathbb{R}\begin{pmatrix}0 \\ 1\end{pmatrix}$, two infinite lines in the plane. Let $R_H,R_V$ be the reflections along $l_H,l_V$ respectively. In addition, let $l_{DP}=\mathbb{R}\begin{pmatrix}1 \\ 1\end{pmatrix}$ and $l_{DS}=\mathbb{R}\begin{pmatrix}1 \\ -1\end{pmatrix}+\begin{pmatrix}0 \\ 1\end{pmatrix}$, corresponding to the primary and second diagonal of the square with corners $(0\ 0)^T, (1\ 1)^T$, a fundamental domain in $\mathbb{R}^2$ for $\mathbb{R}^2/\widetilde{\Lambda}$. Let $R_{DP}, R_{DS}$ be reflections along $l_{DP},l_{DS}$ respectively.

\begin{claim}
The one-to-one Dirichlet energy minimizing mapping of $\widetilde{\mathcal{M}}$ onto $\mathbb{R}^2/\widetilde{\Lambda}$ is symmetric with respect to reflections along $l_H,l_V,l_{DP},l_{DS}$.
\end{claim}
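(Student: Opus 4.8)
The plan is to run the conjugate-maps argument sketched in the introduction, using uniqueness of the energy minimizer together with the built-in symmetry of the eight-copy construction. Write $\mathbb{T}=\mathbb{R}^2/\widetilde{\Lambda}$. By the results of Gortler--Gotsman--Thurston and Lovasz quoted in Section~3, among one-to-one maps of $\widetilde{\mathcal{M}}$ onto $\mathbb{T}$ in the homotopy class fixed by the construction of Section~2, the minimizer $\widetilde{U}$ of the positively weighted simplicial Dirichlet energy is unique up to a translation of $\mathbb{T}$; moreover the weights on $\widetilde{\mathcal{M}}$ are the cotangent weights of $\mathcal{M}$, hence agree on all eight sub-complexes $\mathcal{M}'_1,\dots,\mathcal{M}'_8$. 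For a reflection $R\in\{R_H,R_V,R_{DP},R_{DS}\}$ I would produce a simplicial automorphism $\phi_R\colon\widetilde{\mathcal{M}}\to\widetilde{\mathcal{M}}$ for which $R\circ\widetilde{U}\circ\phi_R$ is again a one-to-one, energy-minimizing map onto $\mathbb{T}$ in the same homotopy class. Uniqueness then forces $R\circ\widetilde{U}\circ\phi_R=\widetilde{U}$ up to a translation, and once that translation is removed (last paragraph) this is exactly the reflection symmetry $\widetilde{U}\circ\phi_R=R\circ\widetilde{U}$.

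It is enough to treat $R_H$ and $R_{DP}$: in the isometry group of $\mathbb{T}$ one has $R_V=R_{DP}\circ R_H\circ R_{DP}$ and $R_{DS}=R_H\circ R_V\circ R_{DP}$, so setting $\phi_{R_V}=\phi_{R_{DP}}\circ\phi_{R_H}\circ\phi_{R_{DP}}$ and similarly for $R_{DS}$, the relations $\widetilde{U}\circ\phi_{R_H}=R_H\circ\widetilde{U}$ and $\widetilde{U}\circ\phi_{R_{DP}}=R_{DP}\circ\widetilde{U}$ chain to give the two remaining symmetries.

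To construct $\phi_R$: the two diagonals and the two mid-lines of the unit square cut it into eight right isosceles triangles, permuted by the order-$8$ dihedral group of the square, and in the layout of Figure~\ref{fig:torus1} the copy $\mathcal{M}'_i$ corresponds to the $i$-th of these triangles. The gluing pattern of Section~2 --- the pairs $\Gamma_0^i\sim\Gamma_0^{i+1}$ for odd $i$, the pairs $\Gamma_1^1\sim\Gamma_1^8$ and $\Gamma_1^i\sim\Gamma_1^{i+1}$ for $i=2,4,6$, and the $\Gamma_2$-pairs $\{1,4\},\{2,7\},\{3,6\},\{5,8\}$ --- was chosen precisely so that it is invariant under this action. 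I would then let $\phi_R$ be the relabeling $\mathcal{M}'_i\mapsto\mathcal{M}'_{\rho(i)}$ induced by $R$, realized on each individual copy by the identity map of $\mathcal{M}$; no (nonexistent) symmetry of $\mathcal{M}$ is used, and the orientation reversal that a reflection must produce comes from the fact that copies glued along a single path carry opposite induced orientations inside $\widetilde{\mathcal{M}}$. The checks are: (i) $\rho$ maps glued pairs to glued pairs with matching path identifications, so $\phi_R$ is a well-defined, face-permuting simplicial automorphism; (ii) $\phi_R$ preserves the cotangent weights, as these are intrinsic to $\mathcal{M}$ and unsigned; (iii) hence $R\circ\widetilde{U}\circ\phi_R$ is a bijection, affine on each face, with the same simplicial Dirichlet energy as $\widetilde{U}$ ($R$ is an isometry, $\phi_R$ permutes faces of equal weight); (iv) $R\circ\widetilde{U}\circ\phi_R$ lies in the same homotopy class as $\widetilde{U}$ --- since $\widetilde{U}_*\colon H_1(\widetilde{\mathcal{M}})\to\mathbb{Z}^2$ is an isomorphism, this reduces to checking that $(\phi_R)_*$ and $R_*^{-1}$ are conjugate via $\widetilde{U}_*$, a finite computation with explicit generators of $H_1(\widetilde{\mathcal{M}})$ read off from the eight-copy decomposition and the integer matrices representing $R_*$.

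Granting (i)--(iv), uniqueness gives $R\circ\widetilde{U}\circ\phi_R=T\circ\widetilde{U}$ for some translation $T$ of $\mathbb{T}$. To kill $T$, note that $\phi_R$ fixes setwise the union of those glued paths $\widetilde{\Gamma}_j^k$ that $\widetilde{U}$ carries onto the axis $l$ of $R$; hence $R\circ\widetilde{U}\circ\phi_R$ again carries that union onto $l$, which forces $T$ to preserve $l$, and evaluating at a vertex of $\phi_R$ lying on $l$ (e.g.\ a point of $l_H\cap l_V$) then forces $T=\mathrm{id}$ --- alternatively one simply takes the canonical translational normalization of the minimizer. The step I expect to be the real work is the bookkeeping behind (i) and (iv): confirming that the Section~2 gluing genuinely carries the full order-$8$ dihedral action (so the $\phi_R$ exist) and that each $\phi_R$ induces on $H_1(\widetilde{\mathcal{M}})$ exactly the automorphism conjugate to $R_*$; the rest --- bijectivity, invariance of the energy, the appeal to uniqueness --- is formal.
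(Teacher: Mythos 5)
Your proposal is correct and follows essentially the same route as the paper: construct for each target reflection $R$ a source-side automorphism of $\widetilde{\mathcal{M}}$ that permutes the eight copies according to the dihedral action, observe that $R\circ\Phi\circ S$ is again an admissible minimizer (equivalently, satisfies the same Gortler--Gotsman--Thurston linear system), and invoke uniqueness. Your version is somewhat more careful than the paper's --- explicitly handling the translation ambiguity, the homotopy-class check, and the reduction to two generating reflections --- but these are refinements of, not departures from, the paper's argument.
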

\begin{proof}
Let $\widetilde{X}$ be the simplicial complex, which is an embedding of $\widetilde{\mathcal{M}}$ by the one-to-one Dirichlet energy minimizing map of $\widetilde{\mathcal{M}}$ onto $\mathbb{R}^2/\widetilde{\Lambda}$. We denote this map by $\Phi$ ($\Phi:\widetilde{\mathcal{M}}\rightarrow\mathbb{R}^2/\widetilde{\Lambda}$). It carries 2-simplices to 2-simplices, 1-simplices to 1-simplices, and so on. We extend in a natural way $R_H,R_V,R_{DP},R_{DS}$ to be maps on simplicial complexes in the plane, and regard $\widetilde{X}$ as a simplicial complex in the fundamental domain. Note that the image of an embedding of a simplicial complex on the fundemantal domain under the reflection operations is still an embedding of the complex onto $\mathbb{R}^2/\widetilde{\Lambda}$.

We define the maps $S_H,S_V,S_{DP},S_{DS}$ which are all automorphisms of the surface $\widetilde{\mathcal{M}}$, in the sense that they are one-to-one, onto, and continuous. We define the map $S_H$ as the "horizontal reflection" (formal explanation in the next lines) of the surface $\widetilde{\mathcal{M}}$. $S_H$ reflects $\widetilde{\mathcal{M}}$ along $\widetilde{\Gamma}_1^1$ concatenated by $\widetilde{\Gamma}_1^3$, meaning it maps $\mathcal{M}_1$ to $\mathcal{M}_8$, $\mathcal{M}_2$ to $\mathcal{M}_7$ and so on. (In general for $i=1,2,3,4$ it maps $\mathcal{M}_i$ to $\mathcal{M}_{9-i}$.) All these maps, each a restriction of $S_H$, are the natural maps from one disk-type mesh to another copy of the same disk-type mesh. $S_H$, defined this way, is an automorphism of $\widetilde{\mathcal{M}}$. The maps $S_V, S_{DP},S_{DS}$ are defined in a similar manner. They are the "reflections" of $\widetilde{\mathcal{M}}$ along: (Case V) $\widetilde{\Gamma}_1^2$ concatenated by $\widetilde{\Gamma}_1^4$, (Case DP) $\widetilde{\Gamma}_0^1$ concatenated by  $\widetilde{\Gamma}_0^3$, (Case DS) $\widetilde{\Gamma}_0^2$ concatenated by  $\widetilde{\Gamma}_0^4$; respectively.

We have the following: $\Phi=R_H\circ\Phi\circ S_H$, but also $S_H \widetilde{\mathcal{M}}\cong\widetilde{\mathcal{M}}$ (the two sides are equal up to renaming), and so $\Phi(\widetilde{\mathcal{M}})=R_H\circ\Phi(\widetilde{\mathcal{M}})$. (The first equation can be seen to hold from observing that both sides satisfy the fully constrained system of linear equations as in Gortler, Gotsman and Thurston 's embedding of a torus in the plane.) The same argument applies for $R_V$ and $S_V$, $R_{DP}$ and $S_{DP}$, $R_{DS}$ and $S_{DS}$. This finishes the proof of the claim.




\end{proof}
\begin{corollary}
$\Phi (\mathcal{M}_1)$ is precisely one of the octants of the fundamental domain of $\mathbb{R}^2/\widetilde{\Lambda}=\mathbb{R}^2/\mathbb{Z}^2$, and is a right angle isosceles triangle with leg length $\frac{1}{2}$ (see Figure \ref{fig:torus1}).
\end{corollary}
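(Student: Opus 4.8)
The plan is to identify $\Phi(\mathcal M_1)$ --- that is, $\Phi(\mathcal M_1')$, the image of the subcomplex $\mathcal M_1'\subseteq\widetilde{\mathcal M}$ --- with one cell of the subdivision of the fundamental square $[0,1]^2$ of $\mathbb{R}^2/\widetilde{\Lambda}$ by its two mid-lines $\{x=\tfrac12\}$, $\{y=\tfrac12\}$ and its two diagonals $\{y=x\}$, $\{x+y=1\}$. An elementary computation shows these four lines are concurrent at the centre $(\tfrac12,\tfrac12)$ and cut the square into eight congruent triangles, each a right isosceles triangle with legs of length $\tfrac12$, the right angle at the midpoint of an edge of the square and the remaining vertex at the centre; write $O_1,\dots,O_8$ for the closed octants (cf. Figure~\ref{fig:torus1}) and $\Sigma$ for their common boundary. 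Viewed on the torus, $\Sigma$ equals the union of the four fixed-point loci of the reflections of the Claim, because $\mathrm{Fix}(R_H)=\{y=0\}\cup\{y=\tfrac12\}$, $\mathrm{Fix}(R_V)=\{x=0\}\cup\{x=\tfrac12\}$, $\mathrm{Fix}(R_{DP})=\{y=x\}$ and $\mathrm{Fix}(R_{DS})=\{x+y=1\}$; moreover $(\mathbb{R}^2/\widetilde{\Lambda})\setminus\Sigma$ is the disjoint union of the eight open octants.

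I would then invoke the symmetry relations $\Phi=R_*\circ\Phi\circ S_*$ established in the Claim; each of them forces $\Phi(\mathrm{Fix}(S_*))\subseteq\mathrm{Fix}(R_*)\subseteq\Sigma$. The boundary $\partial\mathcal M_1'$ is the circle $\widetilde{\Gamma}_0^1\cup\widetilde{\Gamma}_1^1\cup\widetilde{\Gamma}_2^1$, and I claim each of these three arcs lies in the fixed locus of one of the surface reflections: $\widetilde{\Gamma}_1^1\subseteq\mathrm{Fix}(S_H)$ and $\widetilde{\Gamma}_0^1\subseteq\mathrm{Fix}(S_{DP})$ directly from the definitions of those maps, while $\widetilde{\Gamma}_2^1=\Gamma_2^1=\Gamma_2^4$ is the path shared by $\mathcal M_1$ and $\mathcal M_4$, and from the construction $S_V$ interchanges $\mathcal M_1$ and $\mathcal M_4$ through the natural identification of the two copies, which restricts to the identity on that shared path, so $\widetilde{\Gamma}_2^1\subseteq\mathrm{Fix}(S_V)$. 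Hence $\partial\Phi(\mathcal M_1')=\Phi(\partial\mathcal M_1')\subseteq\Sigma$.

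Next I would rule out interior contact with $\Sigma$. Since $\Phi$ is a homeomorphism of $\widetilde{\mathcal M}$ onto the torus and the eight subcomplexes $\mathcal M_i'$ cover $\widetilde{\mathcal M}$ with pairwise disjoint interiors, the images $\Phi(\mathcal M_i')$ are closed topological disks tiling the torus with pairwise disjoint interiors. If some $p\in\operatorname{int}\Phi(\mathcal M_1')$ lay on $\mathrm{Fix}(R_*)$, then $R_*$ would fix $p$ and carry a full neighbourhood of $p$ --- which is contained in $\Phi(\mathcal M_1')$ --- into $\Phi(\mathcal M_{S_*(1)}')$, a different tile, contradicting disjointness of interiors. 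Thus $\operatorname{int}\Phi(\mathcal M_1')$ is a connected open set disjoint from $\Sigma$, hence contained in a single open octant, so $\Phi(\mathcal M_1')\subseteq O_j$ for some $j$. Running the same argument for every copy gives $\Phi(\mathcal M_i')\subseteq O_{j(i)}$ for all $i$; since the eight tiles have disjoint interiors and cover the torus, $i\mapsto j(i)$ is forced to be a bijection, and then the covering property forces $\operatorname{int}O_j\subseteq\Phi(\mathcal M_1')$, so $\Phi(\mathcal M_1')=O_j$ (both sides being closed) --- a right isosceles triangle with legs $\tfrac12$.

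The step I expect to be the main obstacle is the combinatorial bookkeeping in the second paragraph: checking that all three boundary arcs of $\mathcal M_1'$ land inside $\Sigma$, and in particular recognising that $\widetilde{\Gamma}_2^1$ is pinned down not because it is one of the explicitly named mirrors but because $S_V$ happens to interchange the copies $\mathcal M_1$ and $\mathcal M_4$. Were even one of the three arcs to escape $\Sigma$, the complement-of-$\Sigma$ argument would collapse and $\Phi(\mathcal M_1')$ could a priori be some non-triangular fundamental domain for the reflection group instead of an octant.
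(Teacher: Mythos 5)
Your proof is correct and follows the same basic strategy as the paper's: the reflection symmetries established in the Claim, combined with injectivity (disjoint interiors of the tiles) and surjectivity (the tiles cover the torus), force $\Phi(\mathcal{M}_1)$ to coincide with exactly one octant. The paper's own argument is a two-sentence sketch of precisely this dichotomy; your version supplies the details it leaves implicit, notably the identification of the octant walls with the fixed-point loci of $R_H,R_V,R_{DP},R_{DS}$ on the torus, the check that each boundary arc of $\mathcal{M}_1$ lies in the fixed locus of one of the surface involutions (including the slightly hidden fact that $S_V$ swaps $\mathcal{M}_1$ and $\mathcal{M}_4$ and so fixes $\widetilde{\Gamma}_2^1$ pointwise), and the connectedness/counting step that upgrades containment to equality.
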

\begin{proof}
Denote by $O$ the octant of the fundamental domain of $\mathbb{R}^2/\mathbb{Z}^2$ which has non-empty intersection with $\Phi(\mathcal{M}_1)$. Then if it does not hold that $O\subset\Phi(\mathcal{M}_1)$, by the symmetries of $\Phi(\widetilde{\mathcal{M}})$ we get a contradiction for the fact that $\Phi$ is an embedding onto $\mathbb{R}^2/\mathbb{Z}^2$. On the other hand, if it is strictly that $O\subset\Phi(\mathcal{M}_1)$, then we get a contradiction for the assumption that $\Phi$ is one-to-one on its range, and so we conclude the statement.
\end{proof}
\begin{corollary}
$\mathcal{E}(\Phi|_{\mathcal{M}_1},\mathcal{M}_1)=\min_{\phi\in\mathcal{A}}\mathcal{E}(\phi,\mathcal{M}_1)$, where $\mathcal{A}$ is the set of embeddings of $\mathcal{M}$ onto the triangle with corners $(0,0), (0.5,0.5), (0,0.5) $ such that $v_0,v_1,v_2$ are mapped to these corner points respectively.
\end{corollary}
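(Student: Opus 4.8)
The plan is to compare energies on the whole torus: express $\mathcal E(\Phi,\widetilde{\mathcal M})$ as $8$ times $\mathcal E(\Phi|_{\mathcal M_1},\mathcal M_1)$, and then, given any competitor $\phi\in\mathcal A$, build out of it a map on $\widetilde{\mathcal M}$ by the same reflection recipe that produced $\widetilde{\mathcal M}$ and $\Phi$, so that it competes with $\Phi$ in the minimization of the simplicial Dirichlet energy. First I would note that $\Phi|_{\mathcal M_1}$ is itself a member of $\mathcal A$: by Corollary~1 (together with the gluing pattern of Section~2, cf. Figure~\ref{fig:torus1}) it is an embedding of $\mathcal M$ onto the triangle with corners $(0,0),(0.5,0.5),(0,0.5)$ carrying $v_0,v_1,v_2$ to these corners respectively. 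Next, iterating the identities $\Phi=R_*\circ\Phi\circ S_*$ of the Claim shows that for each $i=1,\dots,8$ there is a Euclidean isometry $\rho_i$ of the plane and the natural identification $\sigma_i:\mathcal M_i\to\mathcal M_1$ (an isometry of meshes, since all $\mathcal M_j$ are identical copies of $\mathcal M$) with $\Phi|_{\mathcal M_i}=\rho_i\circ(\Phi|_{\mathcal M_1})\circ\sigma_i$. Since the Dirichlet energy is invariant under pre- and post-composition with isometries, $\mathcal E(\Phi|_{\mathcal M_i},\mathcal M_i)=\mathcal E(\Phi|_{\mathcal M_1},\mathcal M_1)$ for all $i$; and since the $2$-simplices of $\widetilde{\mathcal M}$ are the disjoint union of those of the $\mathcal M_i$ (gluing was performed only along $1$- and $0$-simplices), we get $\mathcal E(\Phi,\widetilde{\mathcal M})=\sum_{i=1}^{8}\mathcal E(\Phi|_{\mathcal M_i},\mathcal M_i)=8\,\mathcal E(\Phi|_{\mathcal M_1},\mathcal M_1)$.

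Now fix an arbitrary $\phi\in\mathcal A$ and define $\widehat\Phi:\widetilde{\mathcal M}\to\mathbb R^2/\widetilde\Lambda$ by $\widehat\Phi|_{\mathcal M_i}:=\rho_i\circ\phi\circ\sigma_i$, using the same $\rho_i,\sigma_i$ as above. The main point is that $\widehat\Phi$ is well defined and continuous. A glued path $\widetilde\Gamma_j^k$ is the common boundary of exactly two cells $\mathcal M_i,\mathcal M_{i'}$ and, by the construction in Section~2, lies on the fixed locus of one of the surface reflections $S_*$, which interchanges these two cells; correspondingly $\rho_{i'}=R_*\rho_i$. Since $\phi$ is a homeomorphism of the closed disc $\mathcal M$ onto the closed triangle, it maps $\partial\mathcal M$ onto the boundary of the triangle and each arc $\Gamma_j$ (whose endpoints are among $v_0,v_1,v_2$) onto the corresponding side of the triangle, a side that lies on the reflection axis $\mathrm{Fix}(R_*)$. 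Hence $R_*$ fixes $\rho_i(\phi(\sigma_i(p)))$ for $p$ in the seam, which is precisely the agreement of the two local definitions there; a similar (or limiting) check at the images of $v_0,v_1,v_2$ completes continuity. The same recipe shows $\widehat\Phi$ is a homeomorphism onto $\mathbb R^2/\widetilde\Lambda$ inducing the same map on $H_1$ (equivalently, the same holonomy / boundary-winding) as $\Phi$, so it is a legitimate competitor in the energy minimization solved by $\Phi$ in the sense of Gortler--Gotsman--Thurston.

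Consequently $\mathcal E(\Phi,\widetilde{\mathcal M})\le\mathcal E(\widehat\Phi,\widetilde{\mathcal M})$. But applying the energy bookkeeping of the first paragraph to $\widehat\Phi$ (again each $\widehat\Phi|_{\mathcal M_i}$ differs from $\phi$ only by pre- and post-composition with isometries) gives $\mathcal E(\widehat\Phi,\widetilde{\mathcal M})=8\,\mathcal E(\phi,\mathcal M_1)$. Combining this with $\mathcal E(\Phi,\widetilde{\mathcal M})=8\,\mathcal E(\Phi|_{\mathcal M_1},\mathcal M_1)$ yields $\mathcal E(\Phi|_{\mathcal M_1},\mathcal M_1)\le\mathcal E(\phi,\mathcal M_1)$ for every $\phi\in\mathcal A$; since $\Phi|_{\mathcal M_1}\in\mathcal A$, this is exactly the asserted minimality.

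The step I expect to be the main obstacle is the construction and justification of $\widehat\Phi$ in the middle paragraph: checking that lifting an arbitrary competitor $\phi$ by the reflection recipe produces an honest continuous map that lies in the class over which $\Phi$ minimizes. This rests on the observation that an embedding onto the closed triangle necessarily sends $\partial\mathcal M$ to the triangle's boundary and the arcs $\Gamma_j$ onto the sides lying on the reflection axes, and on verifying that no winding around the torus is changed. The remaining ingredients --- the additivity of the simplicial Dirichlet energy over $2$-simplices and its invariance under composition with isometries --- are routine.
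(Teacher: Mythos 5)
Your proof is correct and follows essentially the same route as the paper, which argues by contradiction that a better competitor $\phi\in\mathcal{A}$ would yield a map $\Phi'$ on $\widetilde{\mathcal{M}}$ of strictly smaller total energy. The paper leaves the construction of $\Phi'$ implicit; your $\widehat\Phi$, built by the reflection recipe, together with the seam-continuity and homotopy-class checks, is exactly the content being elided there.
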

\begin{proof}
 This is clear - as if this is not the case, there exists $\phi\in\mathcal{A}$ such that $\mathcal{E}(\Phi|_{\mathcal{M}_1},\mathcal{M}_1)>\mathcal{E}(\phi,\mathcal{M}_1)$, and we can construct a map $\Phi'$ for which $\mathcal(\Phi',\widetilde{\mathcal{M}})<\mathcal(\Phi,\widetilde{\mathcal{M}})$. Contradiction.
\end{proof}
 
\section{Mapping to an equilateral triangle}

We show how to use the torus embedding to embed a disk-type mesh onto an equilateral triangle in the plane. Again, assume that we have $v_0,v_1,v_2$, 3 marked vertices on the boundary of the mesh, $\mathcal{M}$. As shown in Figure \ref{fig:hex_tile}, the plane can be tiled by regular hexagons, and one can be convinced that exactly seven regular hexagons can be arranged on a flat torus whose fundamental domain is a rhombus with angles of 60 degrees and 120 degrees, such that their interiors are disjoint. We define $w_1=\begin{pmatrix}1 \\ 0\end{pmatrix}$, $w_2 = \begin{pmatrix}\mathrm{Re}(e^{i\frac{\pi}{3}}) \\ \mathrm{Im}(e^{i\frac{\pi}{3}}) \end{pmatrix}$, and $\Lambda_R = \langle w_1,w_2\rangle$. Each hexagon in the tiling of the plane, can be divided into six regular triangles with disjoint interiors having a common corner at the center of the hexagon. We define an equivalence relation on the triangles consisting of the tiling of the plane, resulting from a refinement of the hexagonal tiling by this division. Two triangles $A,B\subset\mathbb{R}^2$ are equivalent if exist $a_1,a_2\in\mathbb{Z}$, such that $A+a_1 w_1+a_2 w_2=B$. We are now ready to explain how to construct a torus $\widetilde{\mathcal{M}}$ from $42\ (=6\times 7)$ copies of $\mathcal{M}$.

Similar to the case of parametrizing onto an isosceles right-angled triangle, we start with 42 copies of $\mathcal{M}$, $\mathcal{M}_i,\ i=1,...,42$, all embedded identically as $\mathcal{M}$ in $\mathbb{R}^3$. For $i=1,...,42$, we denote by $v^i_j,\ j=0,1,2$ the vertices of $\mathcal{M}_i$ corresponding to the vertices $v_0,v_1,v_2$ of $\mathcal{M}$ respectively. We "glue" the 42 copies according to the partial "gluing instructions" in Figure \ref{fig:hexagons7}. For each $\mathcal{M}_i$ we correspond a triangle in the figure. For each $v_0^i,\ i=1,...,42$, we correspond the corner at the center of the colored hexagon of the triangle attached to $\mathcal{M}_i$. For each $j=1,...,42$, let $v_1^j,v_2^j$ correspond to the two other corners of the triangle attached to $\mathcal{M}_i$, such that $v^i_1$ corresponds to the vertex marked by '1', and $v^i_2$ corresponds to the vertex marked by '2'. 

\begin{figure}[ht!]
\begin{center}
  \includegraphics[trim=5.2cm  7.5cm 4cm 3cm ,clip, scale=0.7]{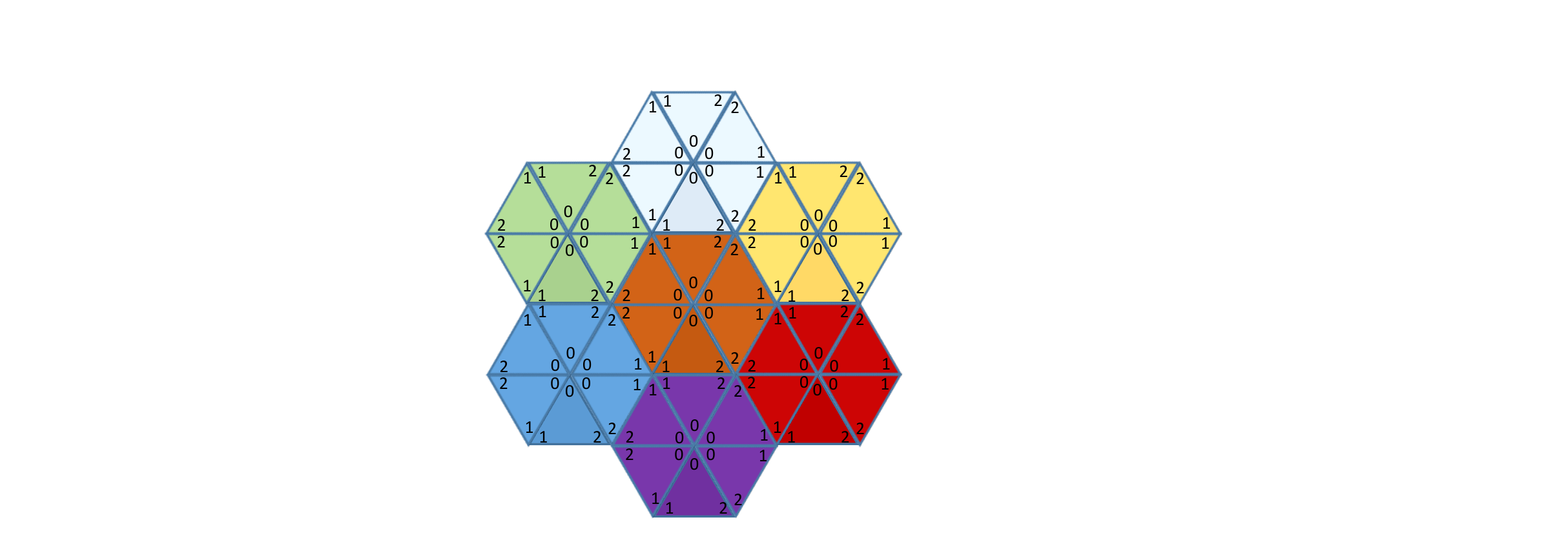}
\caption{Gluing instructions for the copies of the disk mesh comprising the torus.}
\label{fig:hexagons7}
\end{center}
\end{figure}

We now unite the $\mathcal{M}_i,\ i=1,,,.42$, into one connected mesh according to Figure $\ref{fig:hexagons7}$. This is done by looking at a bordering edge between two triangles in the figure. If the triangles correspond to $\mathcal{M}_{i_1}$ and $\mathcal{M}_{i_2}$, the bordering edge corresponds to a path of edges between $v^{i_1}_1$ and $v^{i_1}_2$ in $\mathcal{M}_{i_1}$, and between $v^{i_2}_1$ and $v^{i_2}_2$ in $\mathcal{M}_{i_2}$. Looking at all the edges in Figure \ref{fig:hexagons7} bordering two distinct triangles, we unite all the paths in $\mathcal{M}_i,\ i=1,...,42$ corresponding to these edges, and we obtain one connected simplicial complex denoted by $\mathcal{M}'$.

We look at the boundary of $\mathcal{M}'$. The boundary consists of 21 paths corresponding to the 21 boundary edges of the seven hexagons tiling in Figure \ref{fig:hexagons7}. A boundary edge in Figure \ref{fig:hexagons7} corresponds to a path between vertices $v^{i_0}_1$ and $v^{i_0}_2$ on $\mathcal{M}_{i_0}\subset\mathcal{M}'$ for some $i_0\in\{1,2,...,42\}$. If we were to extend the tiling of the regular triangle along with the marking of the corners, the edge in the figure would be bordering the original triangle and another triangle in the extended tiling. This other triangle is equivalent to one of the tiles in Figure \ref{fig:hexagons7}, say the triangle that corresponds to $\mathcal{M}_{l_0}$. We then unite the path between $v^{i_0}_1$ and $v^{i_0}_2$ in $\mathcal{M}_{i_0}\subset\mathcal{M}'$, with the path between $v^{l_0}_1$ and $v^{l_0}_2$ $\mathcal{M}_{l_0}\subset\mathcal{M}'$. We do this for all the 21 paths corresponding to boundary edges in Figure \ref{fig:hexagons7}. We obtain a simplicial complex $\widetilde{\mathcal{M}}$.

As a surface $\widetilde{\mathcal{M}}$ is homeomorphic to $\mathbb{R}^2/\Lambda_R$, as one can be convinced from considering Figure \ref{fig:hex_tile}, and employing Tutte embedding as before.

Let $S_{R}:\widetilde{\mathcal{M}}\rightarrow \widetilde{\mathcal{M}}$ be the map which "rotates" $\widetilde{\mathcal{M}}$ one step counter-clockwise around the vertex corresponding to the center of the brown hexagon in Figure \ref{fig:hexagons7}: meaning it maps $\mathcal{M}_k\subset\widetilde{\mathcal{M}}$ for some $k\in \{1,..,42\}$ to $\mathcal{M}_{k'}$, where the triangle corresponding to $\mathcal{M}_{k'}$ is the triangle corresponding $\mathcal{M}_{k}$ rotated by 60 degrees counter-clockwise around the center of the brown hexagon. Let $\Phi:\widetilde{\mathcal{M}}\rightarrow \mathbb{R}^2/\Lambda_R$ be the Dirichlet optimal embedding map. Denote by $p_0$ the image under $\Phi$ of the vertex of $\widetilde{\mathcal{M}}$ corresponding to the center of the brown hexagon. Let $H_{RF}:\mathbb{R}^2/\Lambda_R\rightarrow \mathbb{R}^2/\Lambda_R$ be the automorphism of $\mathbb{R}^2/\Lambda_R$, which rotates the torus 60 degrees clockwise around pivot point $p_0$, and reflects each regular triangle along its height emanating from the corner marked by 0 (in the tiling of the torus consisting of 42 regular triangles). We call it the "rotate and flip map".

One has that $\Phi = H_{RF}^{k}\circ\Phi\circ S_R^{k}$ for $k=1,2,3,4,5$. Again, simply by considering the linear equations whose solution is the mapping of the vertices of $\widetilde{\mathcal{M}}$. Similar to before, for $k=1,2,3,4,5$, $\Phi(\widetilde{\mathcal{M}})\cong\Phi\circ S_R^{k}(\widetilde{\mathcal{M}})$, and so we conclude that for $k=1,2,3,4,5$, $\Phi(\widetilde{\mathcal{M}}) = H_{RF}^{k}\circ\Phi(\widetilde{\mathcal{M}})$. Putting it in words, the mapping of the disk in $\widetilde{\mathcal{M}}$ corresponding to the brown hexagon in \ref{fig:hexagons7} has a "rotate and flip symmetry", for a 60 degrees counter-clockwise rotation.

We define a "translate map" $S_T:\widetilde{\mathcal{M}}\rightarrow \widetilde{\mathcal{M}}$. The translate map $S_T$ translates the torus cyclically in the following way: it maps the disk $\widetilde{\mathcal{M}}$ corresponding to the brown hexagon, to the disk corresponding to the light blue hexagon lying to the north of it; it maps the disk corresponding to the purple hexagon to the disk corresponding to the brown hexagon lying north of it; the disk corresponding to the red hexagon to the disk corresponding to the yellow hexagon; and so on. One can refer to \ref{fig:hex_tile} and the marked fundamental domain, to see how the mapping translates the hexagons (yellow goes to dark blue, light blue goes to red and so on). The mapping by $S_T$ of the disks in $\widetilde{\mathcal{M}}$ corresponding to the partition of the hexagon into triangles, is such that the correspondences of the images of the disks to triangles preserves the orientation of the original correspondences. By looking at Figure \ref{fig:hex_tile} one can see that the order of $S_T$ is 7. 
 
We now define the "conjugate" map of $S_T$ on the torus, which we will denote by $H_T$. Let $q\in \mathbb{R}^2/\Lambda_R$ be the vector equal to the difference between the center of the brown hexagon and the center of the purple hexagon, if one has a seven hexagons tiling of the torus as in Figure \ref{fig:hexagons7} where the center of the brown hexagon is at $p_0$ (previously defined). Define $H_T:\mathbb{R}^2/\Lambda_R\rightarrow \mathbb{R}^2/\Lambda_R$ to be the mapping $x\mapsto x+q$  for $x\in\mathbb{R}^2/\Lambda_R$. We have that for $k=1,2,3,4,5,6$, $\Phi = H_T^k\circ\Phi\circ S_T^k$, and so we conclude that if $\mathcal{M}^H\subset\widetilde{\mathcal{M}}$ is the disk corresponding to the brown hexagon in the figure, then $\Phi (\mathcal{M}^H)$ is exactly a hexagon. This is because $\Phi (\widetilde{\mathcal{M}})$ is a union of translates of $\Phi (\mathcal{M}^H)$, where the interiors are disjoint, and if it weren't the case, then a contradiction for the injectivity or the surjectivity of the embedding would follow. Going back to the "rotate and flip map", we conclude that for $i=1,...,42$, each copy of the original mesh, $\mathcal{M}_i$, is embedded by $\Phi$ onto an equilateral triangle. These embeddings are identical up to translation of the range, and thus each has equal minimal Dirichlet energy.

\begin{figure}[h!]
\begin{center}
\includegraphics[trim=0cm  0cm 0cm 0cm ,clip, scale=0.4]{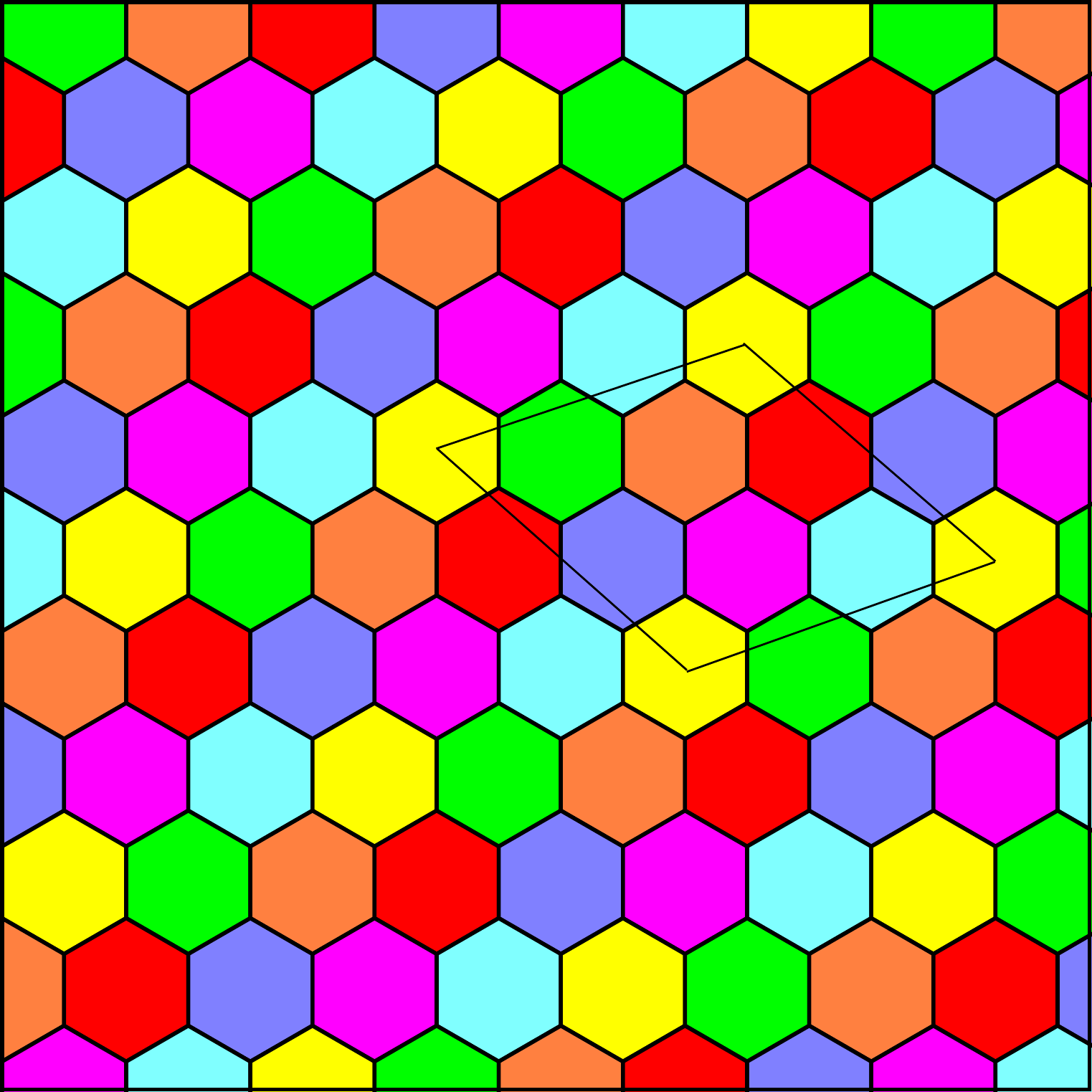}
\caption{Hexagonal tiling of the plane with marked fundamental domain.}
\label{fig:hex_tile}
\end{center}
\end{figure}

\section{The rectangular case}
Given a disk-type mesh $\mathcal{M}$, and four marked vertices on its boundary, we can make four copies of $\mathcal{M}$ and glue them according to Figure \ref{fig:squares} making a torus $\widetilde{\mathcal{M}}$. The embedding of $\widetilde{\mathcal{M}}$ (with "cotan weights") onto $\mathbb{R}^2/\mathbb{Z}^2$ will have horizontal and vertical symmetries, and each copy of $\mathcal{M}$ will be embedded optimally onto a square. The proof is similiar to the previous proofs. One can also replace $\mathbb{Z}^2$ by a different lattice generated by two orthogonal vectors.

\begin{figure}[ht!]
\begin{center}
\includegraphics[trim=0cm  7.5cm 13.5cm 4.7cm ,clip, scale=0.4]{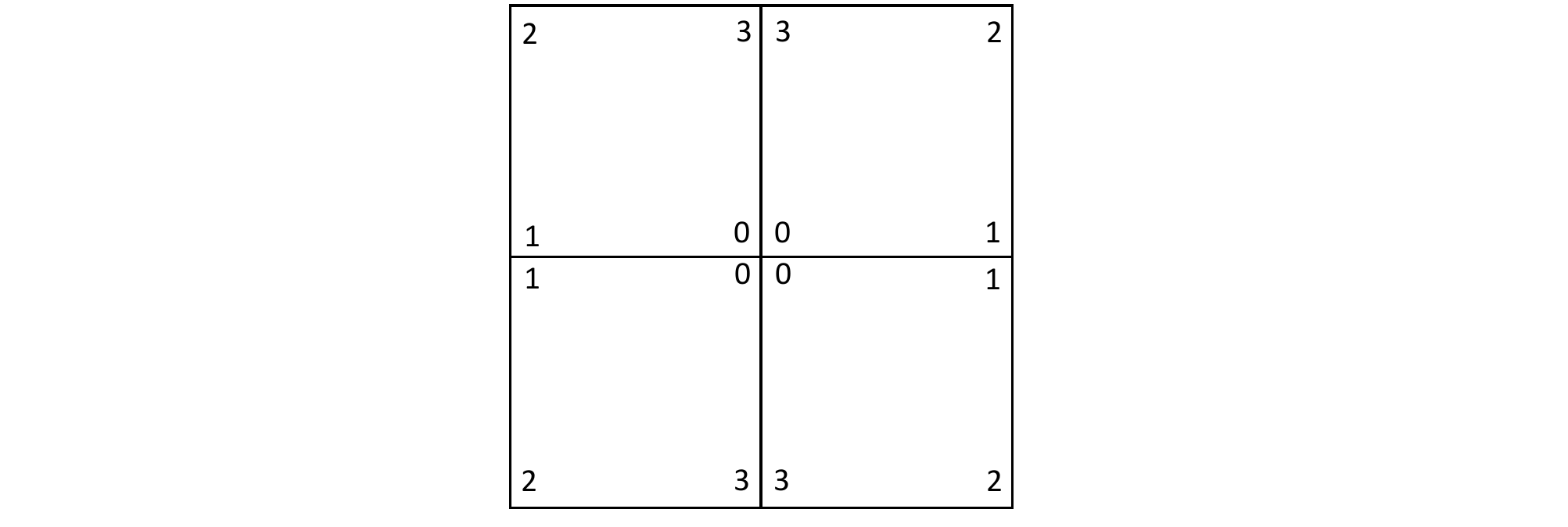}
\caption{Gluing instructions for mapping onto a square.}
\label{fig:squares}
\end{center}
\end{figure}

\section{Complexity of the mappings}
In each of the three embeddings described so far - embedding onto a right-angled isosceles triangle, onto an equilateral triangle and onto a rectangle, we start with disk-type mesh $\mathcal{M}$ with, say, $n$ vertices. For the matter of the proof, we construct a torus out of copies of $\mathcal{M}$. However, if one wishes to implement such an embedding, this could be done with solving two systems of $n$ linear equations with $n$ variables. The variables in one system of equations correspond to the $x$ coordinates of the image of the vertices, and in the other system of equations they corresponds to the $y$ coordinates of the image of the vertices. In each of the cases, we embed the torus, which has more then $n$ vertices, but the proofs show that the embeddings of the different copies are related by affine maps, and so one can actually write an $n$ linear equations systems for an embedding of one copy onto each of the domains.

In the more elaborate case in the next section, computing the embedding requires solving two systems of full rank linear equations, each with approximately $63*n$ variables. This complexity cannot be reduced, and is the reason to expect the method to provide state-of-the-art low distortion parameterizations in applications (for example in the PIEZO1 and PIEZO2 case).

\section{Embedding of a 3-fold symmetric sphere}

In this section we let our simplicial sphere $\widetilde{\mathcal{S}}$ be a simplicial complex which is homeomorphic to $\mathcal{S}^2\subset\mathbb{R}^3$. We have the freedom to choose any three symmetric simple paths, mutually disjoint, on $\widetilde{\mathcal{S}}$ emanating from $p_{\mathcal{O}}$ - one of the points which lie on the sphere and on the axis of symmetry. We now make 63 copies of the sphere $\widetilde{\mathcal{S}}$. On each copy we have the same identical paths as on the original simplicial sphere. For each copy denote the paths on it by $\gamma^i_0,\gamma^i_1,\gamma^i_2$, where $i$ is the index of the copy ($i=1,...,63$). 

We create a torus from the copies, by cutting and stitching along paths $\gamma^i_0,\gamma^i_1,\gamma^i_2, i=1,...,63$. The gluing instructions can be found in Figure \ref{fig:hex_spheres}. Each hexagon not lying on the two gray rectangles, represents one of the simplicial sphere. In Figure \ref{fig:hex_spheres}, '11' and '12' denote the two sides created by "cutting" along path $\gamma^i_0$ (considering sphere $i$). '21' and '22' denote the two sides created by "cutting" along path $\gamma^i_1$, and '31' and '32' denote the two sides created by "cutting" along path $\gamma^i_2$.

Looking at the hexagons copied to the top gray rectangle from the bottom part, and the hexagons copied to the right gray rectangle from the left, one can be convinced that the stitching actully yields a torus.

We then embed the torus $\widetilde{\mathcal{T}}$ onto $\mathbb{R}^2/\Lambda_R$ ($3\times 3 \times 7=63$, recall that $\Lambda_R$ is the rhombus lattice define before). By analogous arguments $\widetilde{\mathcal{S}}\subset\widetilde{\mathcal{T}}$ (that is, any sphere in the constructed torus) is mapped to a tile which has 3-fold rotational symmetry. The map has optimal Dirichlet energy under the symmetry and tiling constraint.

The conjugate maps which prove the symmetric nature of the embedding are the rotations of all spheres by $2\pi/3$ around their axis of symmetry and the rotations of the plane by $2\pi/3$ around the image of the antipodal point of $p_{\mathcal{O}}$. Then we consider conjugate pairs of maps - a translation map on the torus and a map that maps one sphere to another accordingly. The proof is similar to the proof in the embedding of a disk onto an equilateral triangle.

In Figure \ref{fig:embedding_noedge_full} the embedding of the whole torus can be seen. In Figure \ref{fig:two_spheres} the original sphere that we took can be seen. The torus was constructed out of 63 copies of the depicted sphere.

\begin{figure}[ht]
\begin{center}
\includegraphics[trim=1.5cm  1.5cm 0cm 2.5cm ,clip, scale=0.6]{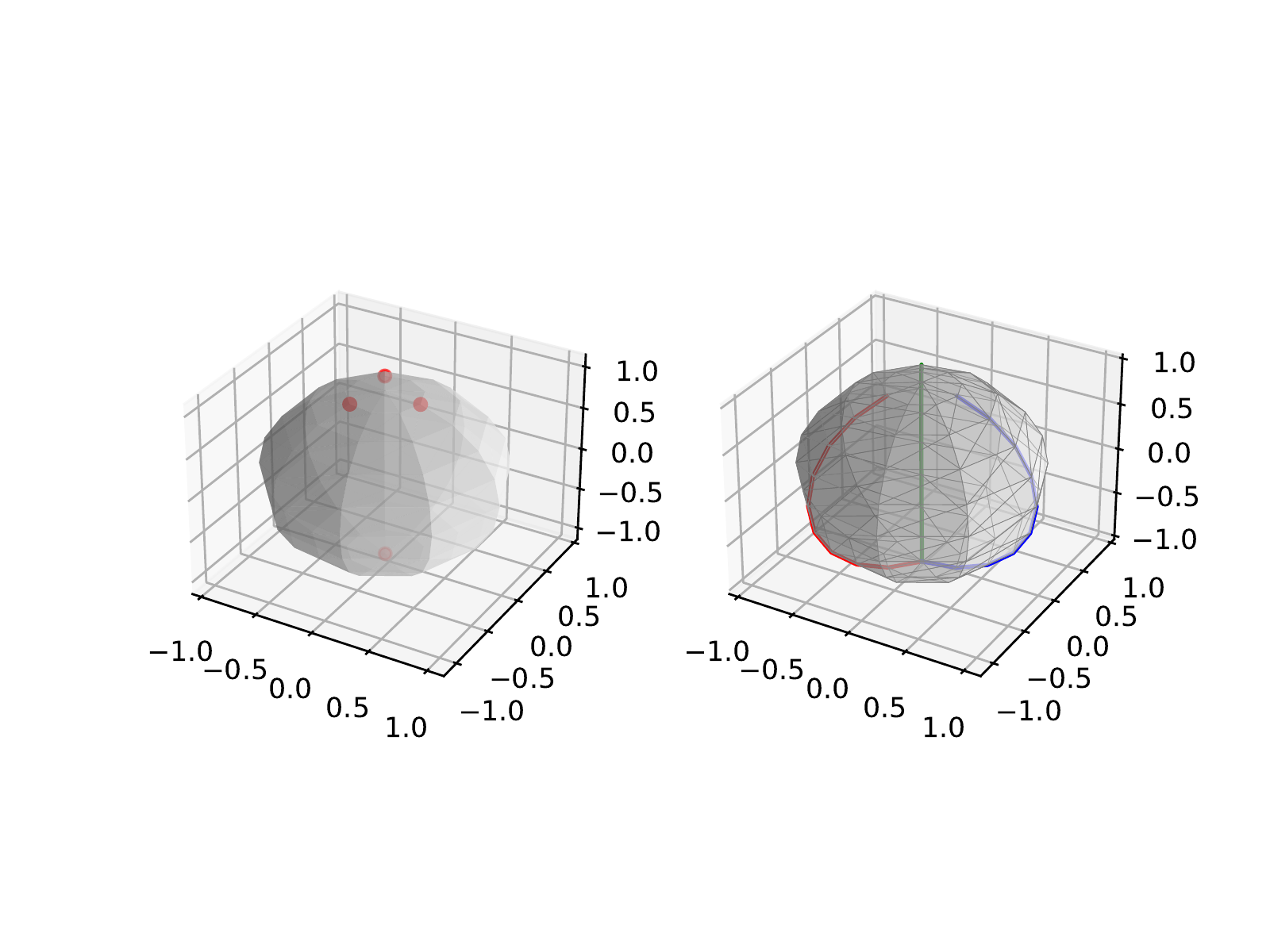}
\caption{The original spheres - on the left with the special points from which the cuts run, on the right with three cuts in different colors.}
\label{fig:two_spheres}
\end{center}
\end{figure}

\begin{figure}[ht!]
\begin{center}
\includegraphics[trim=0cm  20cm 3cm 0cm ,clip, scale=0.2]{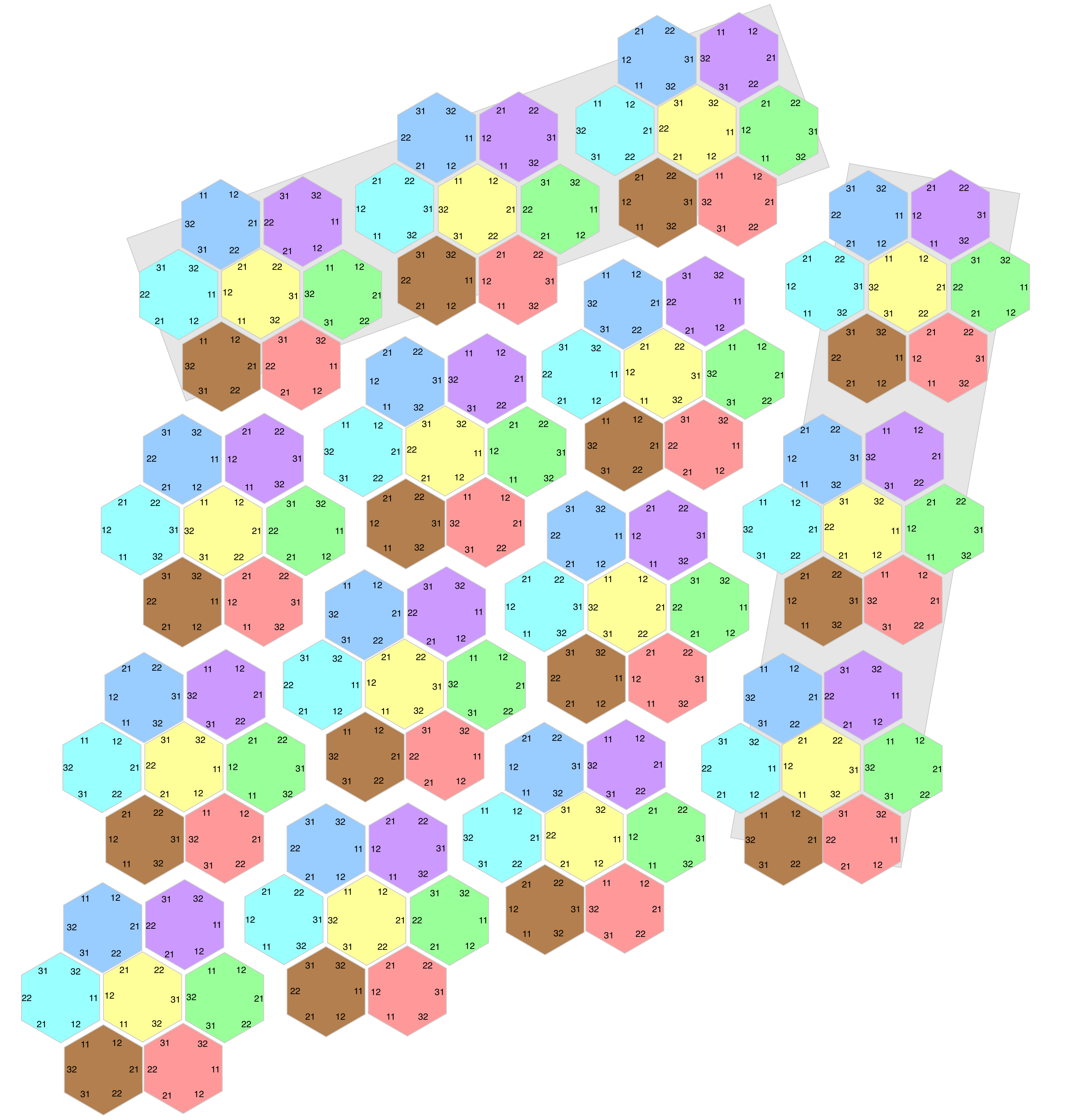}
\caption{Gluing instructions for stitching the 63 copies of the sphere-type mesh.}
\label{fig:hex_spheres}
\end{center}
\end{figure}

\section{Conclusion}
We proved the validity of a method to parametrize (embed) in the plane a 3-fold rotationally symmetric sphere-type mesh. The embedding is preceded by a novel construction of a branched covering of the sphere with a torus domain, which is composed of 63 copies of the spheres, positioned at the same place in the space. For discrete torus surfaces (genus 1 meshes), one can write two systems of weighted harmonic equations for the x and y coordinates of the images of the vertices of the meshes. The weights can be such that the mapping will be conformal (it minimizes the LSCM energy, a discrete notion of "conformal"). We demonstrated the embedding of a discrete sphere using this method (with equal weights for simplicity). The equations to be solved were full rank after fixing one image point of the embedding, which fits the analysis that this branched covering is the minimal one possible, with respect to the number of identically embedded 2-spheres, that allows a branched covering with the same ramification structure for each branch point. The implementation used topological algorithms on graphs needed to locate the first homology group generators of the torus constructed and cutting it along these loops (see \cite{Jin2018}). We believe that the novel method presented in this paper will prove to be highly useful in applications, due to its linearity and optimality.


\printbibliography
\Addresses

\end{document}